        \DeclareMathOperator{\lin}{lin} 
        \DeclareMathOperator{\opt}{opt} 
        \DeclareMathOperator{\spn}{span} 
        \DeclareMathOperator{\init}{I} 
        \DeclareMathOperator{\mixer}{M} 
        \DeclareMathOperator{\phase}{P} 
        \DeclareMathOperator{\dif}{d} 
        \DeclareMathOperator{\dist}{dist} 
        \DeclareMathOperator{\id}{id} 
        \DeclareMathOperator{\im}{im} 
        \newcommand*{\N}{\mathbb{N}}
        \newcommand*{\R}{\mathbb{R}}
        \newcommand*{\C}{\mathbb{C}}
        \newcommand*{\hil}{\mathcal{H}}
        \newcommand*{\defdot}{\,\cdot\,} 
        \newcommand*{\defcolon}{\,:\,} 
        \newcommand*{\bits}{Z(N)} 
        \newcommand*{\one}{\mathds{1}} 
        \newcommand*{\projection}{\mathds{P}} 
        \newcommand*{\lo}{\mathcal{L}} 
        \newcommand*{\solspace}{\mathcal{S}} 
        \newcommand*{\optspace}{\solspace_{\opt}} 
        \newcommand*{\us}{\ket{+}} 
        \newcommand*{\ub}{U_{B}} 
        \newcommand*{\uc}{U_{C}} 
        \newcommand*{\um}{U_{\mixer}} 
        \newcommand*{\simum}{U_{\mixer, 0}} 
        \newcommand*{\sequm}{U_{\mixer, \sigma}} 
        \newcommand*{\up}{U_{\phase}} 
\declaretheorem[style=plain]{theorem}
\declaretheorem[style=plain,sibling=theorem]{corollary}
\declaretheorem[style=plain,sibling=theorem]{lemma}
\declaretheorem[style=plain,sibling=theorem]{proposition}
\declaretheorem[style=definition,sibling=theorem]{definition}
\begin{document}

\title{Elementary Proof of QAOA Convergence}

\date{\today}

\author{Lennart \surname{Binkowski}}
\email{lennart.binkowski@itp.uni-hannover.de}
\affiliation{Institut für Theoretische Physik, Leibniz Universität Hannover}
\author{Gereon \surname{Koßmann}}
\email{kossmann@stud.uni-hannover.de}
\affiliation{Institut für Theoretische Physik, Leibniz Universität Hannover}
\author{Timo \surname{Ziegler}}
\email{timo.ziegler@volkswagen.de}
\affiliation{Institut für Theoretische Physik, Leibniz Universität Hannover}
\affiliation{Volkswagen AG, Berliner Ring 2, 38440 Wolfsburg}
\author{Ren\'{e} \surname{Schwonnek}}
\email{rene.schwonnek@itp.uni-hannover.de}
\affiliation{Institut für Theoretische Physik, Leibniz Universität Hannover}

\begin{abstract}
The Quantum Alternating Operator Ansatz (QAOA) and its predecessor, the Quantum Approximate Optimization Algorithm, are one of the most widely used quantum algorithms for solving combinatorial optimization problems.
However, as there is yet no rigorous proof of convergence for the QAOA, we provide one in this paper.
The proof involves retracing the connection between the Quantum Adiabatic Algorithm and the QAOA, and naturally suggests a refined definition of the `phase separator' and `mixer' keywords.
\end{abstract}

\maketitle

\section{Introduction}\label{section:Introduction}
In the current era of gate-based noisy quantum computers, the class of \textit{variational quantum algorithms} (VQAs) is at the center of research.
First and foremost, the \textit{quantum approximate optimization algorithm} \cite{Farhi2014AQuantumApproximateOptimizationAlgorithm} receives enormous scientific as well as industrial attention.
Like many other VQAs, it is developed for the purpose of solving \textit{combinatorial optimization problems} (COPs) (maximize $f : \{0, 1\}^{n} \rightarrow \R$ subject to some constraints) on quantum computers with the aid of classical optimizers.
It is, to some extend, a discretized and gate-based version of the \textit{quantum adiabatic algorithm} (QAA, \cite{Farhi2000QuantumComputationByAdiabaticEvolution}) which is itself a continuous-time algorithm.
The QAA and the closely related \textit{quantum annealing} \cite{Kadowaki1998QuantummAnnealingInTheTransverseIsingModel} rely on slowly evolving a quantum system (resp.\ some external parameters) in order to transition a well-known initial state into some state representing an optimal solution.
Due to their analog structure, they are not executable on gate-based architectures, but on \textit{quantum annealers} (see \cite{Hauke2020PerspectivesOfQuantumAnnealingMethodsAndImplementations} for an overview) which constitute the second large family of quantum computer architectures.

In its original formulation, the quantum approximate optimization algorithm is only suited for unconstrained problems.
A common technique for enlarging its scope to constrained problems is \textit{softcoding} the constraints.
That is, the constraints enter the objective function as additional terms, penalizing infeasible inputs.
However, for several instances, this approach was observed to produce unfavorable output distributions which suffer from poor optimization quality or feasibility violation (see, e.g., \cite{Baker2022WassersteinSolutionQualityAndTheQuantumApproximateOptimizationAlgorithmAProtfolioOptimizationCaseStudy,delaGranrive2019KnapsackProblemVariantsOfQAUABatteryRevenueOptimisation,vanDam2021QuantumOptimizationHeuristicsWithAnApplicationToKnapsackProblems}).
In order to improve the treatment of constrained problems Hadfield et al.\ extended the quantum approximate optimization algorithm to the \textit{quantum alternating operator ansatz} (QAOA, \cite{Hadfield2019FromTheQuantumApproximateOptimizationAlgorithmToAQuantumAlternatingOperatorAnsatz}) which also allows for \textit{hardcoding} the constraints.
That is, the objective function is left unchanged and feasibility preservation is instead enforced strictly.

In a nutshell, a QAOA-circuit consists of parametrized \textit{phase separator} gates $\up$ and \textit{mixer} gates $\um$.
Both types of gates should preserve feasibility such that - in an ideal setting - feasible states are mapped to feasible states again.
Classically and iteratively optimizing the circuit parameters then should yield a good approximation of an optimal solution.
This heuristic argument goes through only if every feasible state can be reached:
The QAOA-circuit is, given the right parameter values, able to (approximately) produce every feasible state.
Typically, the reachability of feasible states only depends on the properties of the mixer $\um$.

A more or less rigorous proof why the quantum approximate optimization algorithm should converge for every (unconstrained) COP with only one optimal solution was already given in \cite{Farhi2014AQuantumApproximateOptimizationAlgorithm}.
This sketch of a proof, in turn, builds on the close connection to the QAA and the underlying principle of adiabatic evolution/quantum annealing (see \cite{Morita2008MathematicalFoundationOfQuantumAnnealing} for mathematical treatment).
However, neither is the proof carried out in great mathematical detail, nor does it attempt to be as general as possible.
Moreover, since the QAOA comprises similar principles as the quantum approximate optimization algorithm, it stands to reason to extend this result, once suitably formalized, to the QAOA;
a task that, surprisingly, has not yet been tackled.
With this paper we address this issue and come up with refined definitions for the phase separator and mixer gates which make the connection to the quantum approximate optimization algorithm more visible.

First, we prove the convergence of the QAA with suitable initial Hamiltonian and initial state in \autoref{section:ConvergenceProofQAA}.
The proof is built on the aforementioned proof sketch in \cite{Farhi2014AQuantumApproximateOptimizationAlgorithm}.
We extract the underlying principles and already obtain a precise definition for a \textit{mixer Hamiltonian}.
However, by invoking a version of the adiabatic theorem without \textit{gap condition}, we obtain a more general result which does not require the considered optimization problem to have only one single optimal solution.

Second, we prove the convergence of the QAOA with suitable initial state in \autoref{section:ConvergenceProofQAOA}.
For this, we generalize all the properties of the original mixer proposed in the quantum approximate optimization algorithm.
We define our versions of \textit{simultaneous} and \textit{sequential mixers} which directly make use of the just generalized properties.
The convergence proof is then built on the convergence of the QAA instance which admits the respective mixer Hamiltonian as initial Hamiltonian.
The underlying idea is again due to Farhi et al., but suitably generalized for constrained problems and sequential mixers.

\section{Preliminaries}\label{section:Preliminaries}
\subsection{\label{subsection:CombinatorialOptimizationProblems}Combinatorial Optimization Problems}

In the following, we restrict to maximization problems, as minimization tasks may be considered analogously.
This choice of the optimization direction simply allows us to state the convergence proofs more compactly.
A generic COP of \textit{size} $N$ is of the form
\begin{align}\label{equation:COP}
    \max_{\bm{z} \in S} f(\bm{z}),\quad S \subseteq \bits,
\end{align}
where $\bits$ denotes the set of bit strings of length $N$, $f : \bits \rightarrow \R$ is the \textit{objective function}, and $S \subseteq \bits$ is the set of \textit{feasible bit strings} or the \textit{solution set}.
The problem is called \textit{unconstrained} if $S = \bits$.
Moreover, we denote the set of all solutions maximizing $f$ by $S_{\max}$.

\subsection{\label{subsection:ProblemEncodingOnQuantumComputers}Problem Encoding on Quantum Computers}

In order to treat a COP with the help of quantum computers, the problem first has to be translated into a quantum-mechanical language.
The standard encoding procedure identifies each bit string $\bm{z}$ with a computational basis state $\ket{\bm{z}}$ of the $N$-qubit space $\hil \coloneqq \C^{2^{N}}$.
The classical objective function $f$ is further considered as an objective Hamiltonian $C$ via
\begin{align}\label{equation:ObjectiveHamiltonian}
    C \coloneqq \sum_{\bm{z} \in \bits} f(\bm{z}) \ketbra{\bm{z}}.
\end{align}
In this setting, the (optimal) solution bit strings span the (\textit{optimal}) \textit{solution space}
\begin{align}\label{equation:SolutionSpace}
    \solspace_{\max} \coloneqq \spn\{\ket{\bm{z}} \defcolon \bm{z} \in S_{\max}\} \subseteq \solspace \coloneqq \spn\{\ket{\bm{z}} \defcolon \bm{z} \in S\} \subseteq \hil.
\end{align}
The maximization task is now equivalent to finding a computational basis state in $\solspace_{\max}$.
By construction, $\solspace_{\max}$ is the eigenspace of $C\vert_{\solspace}$ corresponding to its largest eigenvalue.
In the following, we will slightly relax the quantum optimization task as we will consider any highest energy state of $C\vert_{\solspace}$ an optimal solution.

\subsection{\label{subsection:QuantumAdiabaticAlgorithm}Quantum Adiabatic Algorithm}

In a nutshell, the continuous-time \textit{quantum adiabatic algorithm} (QAA, \cite{Farhi2000QuantumComputationByAdiabaticEvolution}) tackles the eigenstate search via quasi-adiabatic evolution of an initial state $\ket{\iota}$ with respect to a time-dependent Hamiltonian $H(t)$ which interpolates between an initial Hamiltonian $H_{\init}$ and the objective Hamiltonian $C$.
In case of a maximization task, $\ket{\iota}$ should be a highest energy state of $H_{\init}$.
The interpolating Hamiltonian is typically given by the convex combination \footnote{There exist also more sophisticated convex combinations, where the coefficients in front of $H_{I}$ and $C$ are non-linear functions of $t$ (see \cite{Roland2002QuantumSearchByLocalAdiabaticEvolution}).
Allowing for such problem-specific coefficients can improve the convergence rate of the QAA significantly.}
\begin{align}\label{equation:InterpolatingHamiltonian}
    H(t) = H_{\lin(H_{\init}, C)}(t) \coloneqq (1 - t) H_{\init} + t C,\quad t \in [0, 1].
\end{align}
The evolution speed is controlled via a parameter $T > 0$:
The actual time evolution is with respect to $H(s / T)$, $s \in [0, T]$.
The intuition behind the QAA is that evolving a highest energy state of $H(0)$ sufficiently slowly (i.e\@., $T \gg 1$) yields a highest energy state of $C$ if the energy levels stay separated.
Mathematical rigor is granted by the adiabatic theorems (see \autoref{section:ConvergenceProofQAA}).

\subsection{\label{subsection:QuantumApproximateOptimizationAlgorithm}Quantum Approximate Optimization Algorithm}

The \textit{quantum approximate optimization algorithm} \cite{Farhi2014AQuantumApproximateOptimizationAlgorithm} can, in some sense, be seen as a discrete version of the QAA with fixed initial state
\begin{align}\label{equation:QAOAInitialState}
    \us \coloneqq \frac{1}{\sqrt{2^{N}}} \sum_{\bm{z} \in \bits} \ket{\bm{z}}
\end{align}
and initial Hamiltonian
\begin{align}\label{equation:QAOAMixer}
    B \coloneqq \sum_{n = 1}^{N} \sigma_{x}^{(n)}.
\end{align}
Note that $\us$ is the non-degenerate highest energy state of $B$. $B$ and $C$ are incorporated into parametrized gates:
\begin{align}\label{equation:QAOAGates}
    \ub(\beta) \coloneqq e^{- i \beta B} = \prod_{n = 1}^{N} e^{-i \beta \sigma_{x}^{(n)}}\quad \text{and}\quad \uc(\gamma) \coloneqq e^{-i \gamma C}.
\end{align}
Specifying a \textit{depth} $p \in \N$, the parametrized trial states are constructed via
\begin{align}\label{equation:QAOATrialStates}
    \ket{\vec{\beta}, \vec{\gamma}} \coloneqq V(\vec{\beta}, \vec{\gamma}) \us \coloneqq \left(\prod_{q = 1}^{p} \ub(\beta_{q}) \uc(\gamma_{q})\right) \us
\end{align}
In an iterative process, the parameters are updated by a classical optimization rule in order to maximize the expectation value
\begin{align}\label{equation:QAOAExpectationValue}
    F_{p}(\vec{\beta}, \vec{\gamma}) = \braket{\vec{\beta}, \vec{\gamma} | C | \vec{\beta}, \vec{\gamma}}.
\end{align}
Measuring the final outcome $\ket{\vec{\beta}_{\text{opt}}, \vec{\gamma}_{\text{opt}}}$ in the computational basis then yields a distribution of optimal solution approximations.

\subsection{\label{subsection:QuantumAlternatingOperatorAnsatz}Quantum Alternating Operator Ansatz}

Building on the ideas of the quantum approximate optimization algorithm, the \textit{quantum alternating operator ansatz} (QAOA, \cite{Hadfield2019FromTheQuantumApproximateOptimizationAlgorithmToAQuantumAlternatingOperatorAnsatz}) extends its design to general constrained problems.
Given a COP with objective Hamiltonian $C$ and solution space $\solspace$, the parametrized gate $\ub(\beta)$ is substituted with problem-specific `mixer' gates.
For simplicity, we will focus on the case where the same mixers are used in every iteration.
Thereby, we can collect them again in a single mixer gate $\um(\beta)$.
It is demanded to fulfill two important properties:
\begin{itemize}
    \item Feasibility preservation: For all parameter values $\beta \in \R$ $\um(\beta)(\solspace) \subseteq \solspace$ should hold.
    \item Full mixing of solutions: For all feasible computational basis states $\ket{\bm{z}}, \ket{\bm{z}'} \in \solspace$, there should exist a power $r \in \N$ and a parameter value $\beta \in \R$ so that $\braket{\bm{z} | \um^{r}(\beta) | \bm{z}'} \neq 0$.
\end{itemize}

Furthermore, the parametrized gate $\uc(\gamma)$ could be replaced by a more general `phase separator' gate $\up(\gamma)$ which resembles the classical objective function's behavior.
In order to be more concrete, we will further focus on the case where $\um(\beta)$ and $\up(\gamma)$ are given by (products of) exponentials of Hamiltonians.

The correct definition of $\um(\beta)$ follows naturally from the following convergence considerations and is given in \autoref{section:ConvergenceProofQAOA}.
We define the phase separator already now:

\begin{definition}\label{definition:PhaseSeparator}
    Given a COP with solution space $\solspace$ and optimal solution space $\solspace_{\max}$, a Hamiltonian $H$ is called a \textit{phase separator Hamiltonian} iff it fulfills the following two conditions:
    \begin{itemize}
        \item[(i)] $H$ is diagonal in the computational basis.
        \item[(ii)] The eigenspace of $H\vert_{\solspace}$ corresponding to its largest eigenvalue is $\solspace_{\max}$.
    \end{itemize}
    The corresponding (\textit{parametrized}) \textit{phase separator} is given by
    \begin{align}\label{equation:PhaseSeparator}
        \up(H, \gamma) \coloneqq e^{-i \gamma H}.
    \end{align}
\end{definition}

\section{Convergence Proof for the QAA}\label{section:ConvergenceProofQAA}
We first examine the convergence behavior of the QAA.
Although originally stated for unconstrained problems, we can easily extend the idea to a COP with a non-trivial solution space $\solspace$:
The initial Hamiltonian $H_{\init}$ should preserve feasibility, i.e., $H(\solspace) \subseteq \solspace$, and the initial state $\ket{\iota}$ should lie within $\solspace$.
In addition, we substitute the objective Hamiltonian $C$ with a more general phase separator Hamiltonian $H_{\phase}$ which trivially preserves feasibility.
Then, for every $t \in [0, 1]$, the time evolution with respect to $H_{\lin(H_{\init}, H_{\phase})}(t)$ applied to $\ket{\iota}$ will give again a feasible state.
Thus, we effectively restrict ourselves to the subspace $\solspace \subseteq \hil$.

The underlying concept of the QAA is captured by the adiabatic theorem.
For our analysis, we use a more general version than Farhi et al.\ did in \cite{Farhi2000QuantumComputationByAdiabaticEvolution}.
\begin{theorem}[Adiabatic Theorem, \cite{Teufel2001ANoteOnTheAdiabaticTheoremWithoutGapCondition}]\label{theorem:AdiabaticTheorem}
    Let $\{H(t) \defcolon 0 \leq t \leq 1\} \subseteq \lo(\hil)$ be a family of self-adjoint operators such that $H(\defdot) \in C^{2}([0, 1], \lo(\hil))$.
    For $T > 0$, let $\Tilde{U}_{T}$ be the solution of
    \begin{align}\label{equation:QuasiAdiabaticEvolution}
        \frac{\dif}{\dif s} \Tilde{U}_{T}(s) = - i H(s / T) \Tilde{U}_{T}(s),\quad 0 \leq s \leq T;\quad \Tilde{U}_{T}(0) = \one
    \end{align}
    and set $U_{T}(t) \coloneqq \Tilde{U}_{T}(t T)$, $0 \leq t \leq 1$.
    Let $\lambda(t)$ be an eigenvalue of $H(t)$, respectively, with corresponding spectral projection $\projection(t)$.
    Furthermore, let $P \in C^{2}([0, 1], \lo(\hil))$ such that for every $0 \leq t \leq 1$, $P(t)$ is a projection with $H(t) P(t) = \lambda(t) P(t)$.
    In addition, $P (t) = \projection(t)$ should hold for almost all $t \in [0, 1]$.
    Then
    \begin{align}\label{equation:AdiabaticLimit}
        \lim_{T \to \infty} (\one - P(t)) U_{T}(t) P(0) = 0
    \end{align}
    uniformly in $t$ in $[0,1]$.
\end{theorem}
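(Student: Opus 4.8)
The plan is to deduce the statement from Kato's adiabatic (parallel) transport together with a single non-stationary-phase estimate, the standard route to the gapless adiabatic theorem. First I would introduce the Kato generator $A(t) \coloneqq [\dot P(t), P(t)]$. Differentiating $P(t)^{2} = P(t)$ yields $\dot P(t) P(t) + P(t)\dot P(t) = \dot P(t)$, hence $P(t)\dot P(t)P(t) = 0$; this makes $A(t)$ anti-self-adjoint, purely ``off-diagonal'' with respect to $P(t)$ (i.e.\ $P(t)A(t)P(t) = (\one - P(t))A(t)(\one - P(t)) = 0$), and a solution of $[A(t), P(t)] = \dot P(t)$. Let $W_{T}(t)$ be the unitary (its generator is anti-self-adjoint) solving $\dot W_{T}(t) = \bigl(-iT H(t) + A(t)\bigr)W_{T}(t)$, $W_{T}(0) = \one$. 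Because $H(t)P(t) = \lambda(t)P(t)$ and, taking adjoints with $\lambda(t) \in \R$, also $P(t)H(t) = \lambda(t)P(t)$, a one-line differentiation of $W_{T}(t)^{\ast}P(t)W_{T}(t)$ using $[A,P] = \dot P$ shows it is constant in $t$: the Kato transport \emph{exactly} intertwines the path of projections, $P(t)W_{T}(t) = W_{T}(t)P(0)$ for all $T > 0$ and $t \in [0,1]$. In particular $(\one - P(t))W_{T}(t)P(0) = 0$, so writing $(\one - P(t))U_{T}(t)P(0) = (\one - P(t))\bigl(U_{T}(t) - W_{T}(t)\bigr)P(0)$ reduces the theorem to $\lim_{T \to \infty}\bigl\|\bigl(U_{T}(t) - W_{T}(t)\bigr)P(0)\bigr\| = 0$ uniformly in $t$.

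Next I would study the wave operator $\Omega_{T}(t) \coloneqq W_{T}(t)^{\ast}U_{T}(t)$. Since $\dot U_{T}(t) = -iT H(t)U_{T}(t)$, one computes $\dot\Omega_{T}(t) = -\bigl(W_{T}(t)^{\ast}A(t)W_{T}(t)\bigr)\Omega_{T}(t)$, and $\bigl\|\bigl(U_{T}(t) - W_{T}(t)\bigr)P(0)\bigr\| = \bigl\|\bigl(\Omega_{T}(t) - \one\bigr)P(0)\bigr\|$. Integrating this ODE, splitting $\Omega_{T}(s)P(0) = P(0) + (\Omega_{T}(s) - \one)P(0)$, and applying Gr\"onwall's inequality, the quantity above is bounded, uniformly in $t$ and up to the $T$-independent factor $\exp\!\bigl(\int_{0}^{1}\|A(s)\|\,\dif s\bigr)$, by
\begin{align}
    m(T) \coloneqq \sup_{0 \leq t \leq 1}\ \biggl\| \int_{0}^{t} W_{T}(s)^{\ast}A(s)W_{T}(s)P(0)\,\dif s \biggr\|.
\end{align}
Using the exact intertwining and $PAP = 0$ again, the integrand equals $W_{T}(s)^{\ast}(\one - P(s))\dot P(s)P(s)W_{T}(s)$, so it is off-diagonal with respect to $P(s)$, and conjugation by $W_{T}(s)$ dresses it with the rapidly oscillating phase separating the $\lambda(s)$-channel from the rest of the spectrum of $H(s)$. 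The whole theorem is thereby reduced to the claim $m(T) \to 0$.

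Proving $m(T) \to 0$ is the main obstacle, and it is exactly where the absence of a gap condition matters. Were $\lambda(t)$ separated from $\operatorname{spec}H(t) \setminus \{\lambda(t)\}$ by a uniform gap $g > 0$, one would solve the Sylvester equation $H(s)X(s) - X(s)H(s) = (\one - P(s))\dot P(s)P(s)$ on the off-diagonal block (uniquely, with $\|X(s)\| \leq g^{-1}\|\dot P(s)\|$), rewrite the integrand as $\tfrac{i}{T}\tfrac{\dif}{\dif s}\bigl(W_{T}(s)^{\ast}X(s)W_{T}(s)\bigr)$ plus a term carrying one further $s$-derivative of $X$, and integrate by parts to obtain $m(T) = O(1/T)$; this is Kato's classical argument, which consumes the $C^{2}$-regularity of $H$ and $P$. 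Without a gap this breaks down precisely on the set of times where $\lambda(s)$ touches the remaining spectrum; here the hypothesis $P(s) = \projection(s)$ for almost every $s$ guarantees that this exceptional set is a null set, and I would then follow Avron--Elgart and Teufel: approximate $H(\defdot)$ by a family in which $\lambda$ is artificially isolated by a small gap $g(\delta)$ (exploiting that $P \in C^{2}$ agrees a.e.\ with the genuine spectral projection, e.g.\ via a spectral mollification), apply the gapped estimate for each $\delta$, bound the approximation error by a dominated-convergence argument on the $s$-dependent spectral measures of $H(s)$, and let $\delta \to 0$ after $T \to \infty$. Uniformity in $t \in [0,1]$ is automatic, since every estimate above has the form $\int_{0}^{1}(\cdots)\,\dif s$ with integrand independent of the endpoint $t$, and the $C^{2}$-hypotheses over the compact interval supply all constants uniformly. (In the QAOA application $\hil = \C^{2^{N}}$ is finite-dimensional, so ``no gap'' simply means eigenvalue crossings on a measure-zero set of times, and the regularization step becomes elementary.)
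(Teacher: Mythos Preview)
The paper does not prove this theorem at all: it is quoted verbatim as an external result from Teufel's note \cite{Teufel2001ANoteOnTheAdiabaticTheoremWithoutGapCondition} and is used only as a black box inside the proof of \autoref{theorem:ConvergenceOfQAA}. There is therefore no ``paper's own proof'' to compare against.

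That said, your outline is precisely the Kato-intertwiner argument that Avron--Elgart and Teufel use, so in spirit you are reproducing the cited reference rather than diverging from it. The reductions are correct: $A(t) = [\dot P(t), P(t)]$ is anti-self-adjoint because $P$ (hence $\dot P$) is self-adjoint; $[A,P] = \dot P$ follows from $P\dot P P = 0$; the exact intertwining $P(t)W_{T}(t) = W_{T}(t)P(0)$ drops out of $\tfrac{\dif}{\dif t}\bigl(W_{T}^{\ast}PW_{T}\bigr) = 0$ once one uses $[H,P] = 0$; and the Gr\"onwall step reducing everything to $m(T) \to 0$ is standard. The only part you leave genuinely sketchy is the gapless estimate for $m(T)$: in Teufel's proof this is done not by mollifying $H$ but by replacing the integrand $(\one - P(s))\dot P(s)P(s)$ with $g_{\delta}(H(s) - \lambda(s))\dot P(s)P(s)$ for a cut-off $g_{\delta}$ supported away from zero, solving the commutator equation on that piece to get $O(1/(T\delta))$, and controlling the remainder $\int_{0}^{1}\bigl\|(\one - P(s) - g_{\delta}(H(s)-\lambda(s)))\dot P(s)\bigr\|\,\dif s \to 0$ as $\delta \to 0$ by dominated convergence, using that $P(s) = \projection(s)$ almost everywhere. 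Your description (``spectral mollification'', ``let $\delta \to 0$ after $T \to \infty$'') captures this, though the order of limits and the place where the a.e.\ hypothesis enters deserve to be spelled out explicitly if you were writing this up in full.
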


\autoref{theorem:AdiabaticTheorem} essentially states that, in the adiabatic limit, starting within (a subspace of) the eigenspace of $H(0)$ corresponding to the eigenvalue $\lambda(0)$, one stays within the eigenspace of $H(t)$ corresponding to the eigenvalue $\lambda(t)$, $0 \leq t \leq 1$, if one follows the time evolution generated by $H$, and the curve of spectral projections $P$ can be $C^{2}$-continued through all potential level crossings.
In contrast, Farhi et al.\ used a version of the adiabatic theorem that prohibits any level crossing (see \cite{Messiah1976QuantumMechanicsVol2}).

A sketch of a convergence proof for the QAA was given in \cite{Farhi2014AQuantumApproximateOptimizationAlgorithm} as an intermediate step to argue the convergence of the quantum approximate optimization algorithm.
Besides the adiabatic theorem, the proof is mainly based on the \hyperref[theorem:PerronFrobenius]{Perron-Frobenius Theorem}.
First recall the definition of irrecudibility in the context of matrices.
\begin{definition}\label{definition:Irreducibility}
    A matrix $A \in \C^{n \times n}$ is called \textit{irreducible} iff there are no proper $A$-invariant coordinate subspaces of $\C^{n}$.
    That is, the only coordinate subspaces left invariant by $A$ are $\{0\}$ and $\C^{n}$.
\end{definition}

\begin{theorem}[Perron-Frobenius]\label{theorem:PerronFrobenius}
    Let $A \in \C^{n \times n}$ be component-wisely non-negative and irreducible. Then $A$ admits a non-degenerate largest eigenvalue.
\end{theorem}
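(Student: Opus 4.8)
The plan is to follow the classical route to \autoref{theorem:PerronFrobenius}: produce a nonnegative eigenvector by a fixed-point argument, upgrade it to a strictly positive one using irreducibility, and then read off both the maximality and the non-degeneracy of its eigenvalue. Throughout I would use the combinatorial form of \autoref{definition:Irreducibility}: a coordinate subspace $\spn\{e_{j} \defcolon j \in J\}$ is $A$-invariant precisely when $A_{ij} = 0$ for all $i \notin J$ and $j \in J$, so irreducibility (for $n \geq 2$; the case $n = 1$ is trivial) forbids any $\emptyset \neq J \subsetneq \{1, \dots, n\}$ with this property. In particular $A$ has no zero column (a zero $j$-th column would make $\spn\{e_{j}\}$ invariant), and hence $Ax \neq 0$ for every $x \geq 0$ with $x \neq 0$, since a column indexed by the support of $x$ is nonzero and contributes a positive entry to $Ax$. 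Consequently $x \mapsto Ax / \sum_{i}(Ax)_{i}$ is a well-defined continuous self-map of the compact convex standard simplex $\Delta \coloneqq \{x \in \R^{n} \defcolon x \geq 0,\, \sum_{i} x_{i} = 1\}$, and Brouwer's fixed-point theorem furnishes $v \in \Delta$ with $Av = rv$, where $r \coloneqq \sum_{i}(Av)_{i} > 0$.

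Next I would show that every nonnegative, nonzero $r$-eigenvector $z$ is in fact strictly positive: if $J \coloneqq \{j \defcolon z_{j} > 0\}$ were proper, then $0 = r z_{i} = (Az)_{i} = \sum_{j} A_{ij} z_{j}$ for each $i \notin J$ would force $A_{ij} = 0$ for all $j \in J$, making $\spn\{e_{j} \defcolon j \in J\}$ invariant --- a contradiction. In particular $v > 0$. For maximality, let $\lambda$ be any eigenvalue with eigenvector $w$; the triangle inequality gives $|\lambda|\,|w| \leq A|w|$ componentwise, and choosing $c > 0$ with $|w| \leq cv$ (possible since $v > 0$) and iterating the monotone matrix $A$ yields $|\lambda|^{k}|w| \leq A^{k}|w| \leq c A^{k} v = c r^{k} v$ for all $k \in \N$. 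Reading this off at a coordinate on which $w$ does not vanish forces $|\lambda| \leq r$; as $r$ is itself an eigenvalue, it is the largest eigenvalue (in modulus, and therefore also in real part).

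It remains to prove non-degeneracy. Suppose $\dim \ker(A - r\one) \geq 2$. Since $A$ has real entries and $r \in \R$, this kernel has a real basis, so it contains a real vector $w$ that is not a scalar multiple of $v$. Because $w$ is not proportional to the strictly positive $v$, the vector $v + tw$ must leave the open positive orthant for $|t|$ large along at least one of the rays $t > 0$, $t < 0$; let $t^{\ast} \neq 0$ be the value of smallest modulus at which $v + tw$ reaches the boundary of the nonnegative orthant. Then $z \coloneqq v + t^{\ast} w$ lies in $\ker(A - r\one)$, is nonnegative, and is nonzero (otherwise $v = -t^{\ast} w$, contradicting $w \not\parallel v$), yet it has a vanishing coordinate --- contradicting the positivity just established. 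Hence $\ker(A - r\one)$ is one-dimensional, i.e.\ the largest eigenvalue is non-degenerate.

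The coordinate-subspace bookkeeping and the orthant-boundary argument are routine; the one genuinely delicate point is this last step, where one must check simultaneously that the perturbed vector $z$ is nonnegative, nonzero, and still an $r$-eigenvector before the positivity argument applies. (If one additionally wanted $r$ to be \emph{algebraically} simple, one would supplement this with an argument excluding a nontrivial Jordan block at $r$, e.g.\ pairing a hypothetical generalized eigenvector against a strictly positive left eigenvector of $A^{\top}$ --- which exists by applying the above to $A^{\top}$, still irreducible since reversing the edges of a strongly connected digraph preserves strong connectivity --- but the geometric non-degeneracy above is what the convergence proofs in this paper require.)
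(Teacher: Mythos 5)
Note first that the paper does not prove this statement at all: it is quoted as the classical Perron--Frobenius theorem and simply used, so there is no internal proof to compare against, and your argument has to stand on its own --- which it does. Your route (Brouwer's fixed-point theorem on the simplex to produce a nonnegative eigenvector with eigenvalue $r>0$, the support argument showing every nonnegative nonzero $r$-eigenvector is strictly positive, the iterated bound $|\lambda|^{k}|w| \leq c\,r^{k}v$ for maximality, and the boundary-crossing perturbation $v + t^{\ast}w$ for simplicity) is one of the standard self-contained proofs, and each step checks out: irreducibility in the sense of \autoref{definition:Irreducibility} does rule out zero columns for $n \geq 2$, the normalized map is therefore well defined on the simplex, the componentwise triangle inequality and monotonicity of $A$ give $|\lambda| \leq r$, and the vector $v + t^{\ast}w$ is indeed a nonnegative, nonzero $r$-eigenvector with a vanishing coordinate, contradicting the positivity lemma. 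Your closing caveat is also apt: you establish geometric simplicity and only sketch the left-eigenvector argument for algebraic simplicity, but this is exactly the right reading of ``non-degenerate'' for this paper, since the theorem is only ever applied to $H_{\lin(B, C)}(t_{0})\vert_{\solspace}$, a real symmetric matrix in the computational basis, where geometric and algebraic multiplicities coincide; if you wanted the statement in its full classical (non-Hermitian) strength, you would just need to write out the one-line pairing of a putative generalized eigenvector against the strictly positive left eigenvector of $A^{\top}$, as you indicate.
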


The crucial observation is that the matrix representation of the initial Hamiltonian \eqref{equation:QAOAMixer} in the computational basis fulfills both requirements of the \hyperref[theorem:PerronFrobenius]{Perron-Frobenius Theorem}.
As this will also play an essential role throughout our convergence proof, we use these very properties for giving a first definition of a \textit{mixer}.

\begin{definition}\label{definition:Mixer}
    A Hamiltonian $B \in \lo(\hil)$ is called a \textit{mixer} for a COP with solution space $\solspace$ iff $B(\solspace) \subseteq \solspace$ and $B\vert_{\solspace} \in \lo(\solspace)$ is component-wise non-negative and irreducible in the computational basis.
\end{definition}

The idea is now to apply the \hyperref[theorem:PerronFrobenius]{Perron-Frobenius Theorem} to the linear interpolation $H_{\lin(B, C)}(t)$ at every time $0 \leq t < 1$ to conclude the existence of an eigenvalue curve $\lambda_{\max}$ that connects both the largest eigenvalues of $H_{\lin(B, C)}\vert_{\solspace}(0) = B\vert_{\solspace}$ and $H_{\lin(B, C)}\vert_{\solspace}(1) = C\vert_{\solspace}$.
For this, we need the following immediate result which can be proven quite easily.
\begin{corollary}\label{corollary:IrreducibleMatrices}
    Let $A \in \C^{n \times n}$ be diagonal and let $B \in \C^{n \times n}$ be irreducible.
    Then also $A + B$ is irreducible.
\end{corollary}

\begin{theorem}[Convergence of QAA]\label{theorem:ConvergenceOfQAA}
    Consider a COP with solution space $\solspace \subseteq \hil$, optimal solution space $\optspace \subseteq \solspace$, and phase separator Hamiltonian $C$.
    If $B \in \lo(\hil)$ is a mixer Hamiltonian in the sense of \autoref{definition:Mixer} and $\ket{\iota} \in \solspace$ is a highest energy state of $B\vert_{\solspace}$, then
    \begin{align}\label{equation:AdiabaticEvolutionLinearInterpolation}
        \lim_{T \to \infty} U_{T}(1) \ket{\iota} \in \optspace,
    \end{align}
    where $U_{T}$ is the quasi-adiabatic evolution w.r.t.\ to the linear interpolation between $B$ and $C$.
\end{theorem}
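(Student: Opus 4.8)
The plan is to restrict the whole evolution to the invariant subspace $\solspace$, use the \hyperref[theorem:PerronFrobenius]{Perron--Frobenius Theorem} to control the top of the spectrum of the interpolating Hamiltonian away from the final time, package this into a $C^{2}$ curve of rank-one spectral projections by analytic perturbation theory, and then hand that curve to \autoref{theorem:AdiabaticTheorem}. First I would record the reduction: since $B$ is a mixer in the sense of \autoref{definition:Mixer} we have $B(\solspace) \subseteq \solspace$, while $C$ is diagonal in the computational basis and hence also preserves $\solspace$; therefore $H_{\lin(B, C)}(t)(\solspace) \subseteq \solspace$ for all $t \in [0,1]$, and since $\ket{\iota} \in \solspace$ the quasi-adiabatic evolution $U_{T}$ leaves $\solspace$ invariant and is there generated by $H_{\lin(B, C)}(t)\vert_{\solspace}$. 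It thus suffices to run the argument inside the Hilbert space $\solspace$ with the restricted operators. I would also note that replacing $C$ by $C + \mu \one$ changes $H_{\lin(B,C)}(t)$ only by $t\mu\one$, which contributes merely a global phase to $U_{T}(1)$ and affects neither $\optspace$ nor the highest-energy structure; hence, without loss of generality, the objective function is non-negative, so that $C\vert_{\solspace}$ is component-wise non-negative in the computational basis.

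Next comes the spectral analysis. For $t \in (0,1)$ write $H_{\lin(B, C)}(t)\vert_{\solspace} = (1-t)\bigl(B\vert_{\solspace} + \tfrac{t}{1-t}C\vert_{\solspace}\bigr)$; the operator in parentheses is component-wise non-negative (a sum of the non-negative $B\vert_{\solspace}$ and a non-negative diagonal matrix) and, by \autoref{corollary:IrreducibleMatrices}, irreducible, so the \hyperref[theorem:PerronFrobenius]{Perron--Frobenius Theorem} endows it -- and hence, after multiplying by $1-t > 0$, also $H_{\lin(B, C)}(t)\vert_{\solspace}$ -- with a non-degenerate largest eigenvalue. At $t = 0$ the same conclusion holds directly since $B\vert_{\solspace}$ is non-negative and irreducible, so the largest eigenvalue $\lambda_{\max}(t)$ of $H_{\lin(B, C)}(t)\vert_{\solspace}$ is simple for every $t \in [0,1)$, whereas at $t = 1$ its eigenspace is $\optspace$ by property (ii) of a phase separator Hamiltonian, possibly with multiplicity $> 1$. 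To obtain the projection curve required by \autoref{theorem:AdiabaticTheorem}, I would invoke the analytic perturbation theory of Rellich: since $t \mapsto H_{\lin(B, C)}(t)\vert_{\solspace}$ is affine, hence real-analytic, there exist real-analytic eigenvalue branches $\lambda_{1}(t), \dots, \lambda_{m}(t)$ and a real-analytic orthonormal eigenbasis $\phi_{1}(t), \dots, \phi_{m}(t)$ of $H_{\lin(B, C)}(t)\vert_{\solspace}$ on all of $[0,1]$, where $m = \dim \solspace$. For $t \in [0,1)$ exactly one index attains the simple maximum $\lambda_{\max}(t)$, and since all $\lambda_{k}$ are continuous this index is locally constant on the connected set $[0,1)$, hence constant, say equal to $j_{\ast}$. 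I would then set $\lambda(t) \coloneqq \lambda_{j_{\ast}}(t)$ and $P(t) \coloneqq \ketbra{\phi_{j_{\ast}}(t)}$ for all $t \in [0,1]$: this $P$ is real-analytic (in particular $C^{2}$), each $P(t)$ is a rank-one projection with $H_{\lin(B, C)}(t)\vert_{\solspace} P(t) = \lambda(t) P(t)$, and $P(t)$ coincides with the spectral projection of $\lambda(t)$ for every $t \in [0,1)$, i.e.\ for almost all $t \in [0,1]$. Moreover $\lambda(1) = \lim_{t \to 1} \lambda_{\max}(t)$ is the largest eigenvalue of $C\vert_{\solspace}$, so $\im P(1) = \C\,\phi_{j_{\ast}}(1) \subseteq \optspace$.

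It then remains to apply \autoref{theorem:AdiabaticTheorem} on $\solspace$ with the family $H_{\lin(B, C)}(\cdot)\vert_{\solspace}$, the eigenvalue $\lambda(\cdot)$, and the projection curve $P(\cdot)$. Because $\ket{\iota}$ is a highest-energy state of $B\vert_{\solspace}$ and that eigenvalue is non-degenerate, $\ket{\iota}$ spans $\im P(0)$, whence $P(0)\ket{\iota} = \ket{\iota}$. The theorem yields $\lim_{T \to \infty} (\one - P(1)) U_{T}(1) \ket{\iota} = 0$, and since $\im P(1) \subseteq \optspace$ the orthogonal projection $\projection_{\optspace}$ onto $\optspace$ satisfies $\one - \projection_{\optspace} \leq \one - P(1)$, so $\lVert (\one - \projection_{\optspace}) U_{T}(1) \ket{\iota} \rVert \to 0$. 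Thus $\dist\bigl(U_{T}(1)\ket{\iota}, \optspace\bigr) \to 0$, and the adiabatic limit lies in the closed subspace $\optspace$, as claimed.

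The step I expect to be the real obstacle is the construction of the $C^{2}$ projection curve through the final time $t = 1$: one must accommodate the possible degeneracy of the top eigenvalue of $C\vert_{\solspace}$ together with the possibility that the analytic eigenvalue branches cross somewhere in $(0,1)$. This is exactly the point where analytic perturbation theory (Rellich, or Kato's treatment of one-parameter self-adjoint families) is indispensable -- mere continuity of the eigenprojections would not meet the $C^{2}$ hypothesis of \autoref{theorem:AdiabaticTheorem} -- and where the observation that the index of the maximal branch is eventually constant does the essential work.
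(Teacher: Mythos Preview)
Your proposal is correct and follows essentially the same approach as the paper: restrict to $\solspace$, shift so that $C$ has non-negative spectrum, use \autoref{corollary:IrreducibleMatrices} together with the \hyperref[theorem:PerronFrobenius]{Perron--Frobenius Theorem} to make the top eigenvalue simple on $[0,1)$, invoke analytic perturbation theory for one-parameter self-adjoint families to produce the $C^{2}$ (in fact analytic) projection curve through $t=1$, and feed this into \autoref{theorem:AdiabaticTheorem}. The only cosmetic differences are that the paper cites Kato's formulation of the analytic eigenprojection continuation whereas you use Rellich's analytic eigenbasis (equivalent here), and the paper shifts both $B$ and $C$ by the same constant rather than shifting $C$ alone by a $t$-dependent scalar; both variants only alter the evolution by a global phase.
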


In the following proof, we directly identify all appearing operators $\lo(\solspace)$ with their matrix representation in the computational basis.

\begin{proof}
    Denote by $\lambda_{\max}(t)$ the largest eigenvalue of $H_{\lin(B, C)}\vert_{\solspace}(t)$, for $0 \leq t \leq 1$, respectively.
    Let $0 \leq t_{0} < 1$.
    Since $B\vert_{\solspace}$ is irreducible, so is $(1 - t_{0}) B\vert_{\solspace}$.
    As $C$ is diagonal, also $H_{\lin(B, C)}(t_{0})\vert_{\solspace} = (1 - t_{0}) B\vert_{\solspace} + t_{0} C\vert_{\solspace}$ is irreducible by \autoref{corollary:IrreducibleMatrices}.
    W.l.o.g.\ assume that $C$ has non-negative spectrum \footnote{
        Otherwise consider $\tilde{B} \coloneqq B - \lambda_{\min} \one$ and $\tilde{C} = C - \lambda_{\min} \one$, where $\lambda_{\min} < 0$ is the smallest eigenvalue of $C$.
        Then $H_{\lin(\tilde{B}, \tilde{C})} = H_{\lin(B, C)} - \lambda_{\min}\one$ generates the same time evolution as $H_{\lin(B, C)}$ up to a global phase.
    }.
    Then, $B\vert_{\solspace}$ as well as $C\vert_{\solspace}$ are component-wisely non-negative.
    In summary, $H_{\lin(B, C)}(t_{0})\vert_{\solspace}$ is component-wisely non-negative and irreducible.
    According to the \hyperref[theorem:PerronFrobenius]{Perron-Frobenius Theorem}, $\lambda_{\max}(t_{0})$ is non-degenerate.\\
    Furthermore, the mapping 
    \begin{align}\label{equation:LinearInterpolatingMapping}
        H_{\lin(B, C)}\vert_{\solspace}: \R \rightarrow \lo(\solspace),\quad t \mapsto H_{\lin(B, C)}\vert_{\solspace}(t) = (1 - t) B\vert_{\solspace} + t C\vert_{\solspace} 
    \end{align}
    is analytic and $H_{\lin(B, C)}\vert_{\solspace}(t)$ is symmetric for all $t \in \R$.
    Let $L$ denote the discrete set of level crossings/eigenvalue splittings of $H_{\lin(B, C)}\vert_{\solspace}$.
    According to \cite[Theorem 6.1]{Kato1995PertubationTheoryForLinearOperators}, the instantaneous eigenvalues of $H_{\lin(B, C)}\vert_{\solspace}(t)$, $t \in \R$, can be sorted as $\{\lambda_{m}(t) \defcolon 1 \leq m \leq M\}$, $M \leq 2^{N}\}$, such that $[t \mapsto \lambda_{m}(t)] \in C^{\,\omega}(\R, \R)$ and for the corresponding spectral projections $\projection_{m}(t)$, it holds that $[t \mapsto \projection_{m}(t)] \in C^{\,\omega}\big(\R \setminus L, \lo(\solspace)\big)$, for every $1 \leq m \leq M$.
    Furthermore, the spectral projections have removable singularities in $L$, i.e.\ there exist analytic continuations $P_{m}$, defined on whole $\R$, such that $P_{m}(t) = \projection_{m}(t)$ for $t \in \R \setminus L$, for all $1 \leq m \leq M$.
    By continuity, these continuations are themselves orthogonal projections with constant rank and fulfill 
    \begin{align*}
        H_{\lin(B, C)}\vert_{\solspace}(t) P_{m}(t) = \lambda_{m}(t) P_{m}(t)
    \end{align*}
    for all $t \in \R$.
    W.l.o.g.\ assume $\lambda_{1}(0) = \lambda_{\max}(0)$.
    Since $\lambda_{\max}(t_{0})$ remains non-degenerate for $0 \leq t_{0} < 1$, it follows that $\lambda_{1} \equiv \lambda_{\max}$ on $[0, 1)$ and by continuity of $\lambda_{1}$ that $\lambda_{1} \equiv \lambda_{\max}$ on $[0, 1]$.
    In addition, the corresponding spectral projection $\projection_{1}$ is well-defined on $[0, 1)$. Therefore, its continuation $P_{1}$ fulfills all properties necessary to apply \autoref{theorem:AdiabaticTheorem}, i.e.\ \eqref{equation:AdiabaticLimit} holds.
    Since $P_{1}(0) = \projection_{1}(0)$, one especially obtains that 
    \begin{align*}
        &\ 0 = \lim_{T \to \infty} (\one - P_{1}(1)) U_{T}(1) P_{1}(0) \ket{\iota} =  \lim_{T \to \infty} (\one - P_{1}(1)) U_{T}(1) \ket{\iota} \\
        \Leftrightarrow\quad & \lim_{T \to \infty} U_{T}(1) \ket{\iota} = P_{1}(1) \lim_{T \to \infty} U_{T}(1) \ket{\iota}.
    \end{align*}
    Since $P_{1}(1)$ is a projection with $H_{\lin(B, C)}\vert_{\solspace}(1) P_{1}(1) = \lambda_{\max}(1) P_{1}(1)$, one concludes \eqref{equation:AdiabaticEvolutionLinearInterpolation}.
\end{proof}

Following the above proof, one realizes that the eigenvalue curve $\lambda_{\max}$ does not cross any other eigenvalue curve of $H_{\lin(B, C)}\vert_{\solspace}$ except, possibly, at $t = 1$.
In \cite{Farhi2014AQuantumApproximateOptimizationAlgorithm}, even a level crossing at $t = 1$ is avoided by assuming that the COP only has one optimal solution, implying that $\lambda_{\max}(1)$ is non-degenerate.
However, by invoking a more general version of the adiabatic theorem, we were able to get rid of this assumption.

\begin{figure}
\begin{tikzpicture}[scale=2]
  \draw[->] (0, 0) -- ++(3, 0) node[right] {$t$};
  \draw (0, 0) -- ++(0,-0.05) node[below] {$0$};
  \draw (2.8, 0) -- ++(0,-0.05) node[below] {$1$};
  \draw[->] (0, 0) -- (0, 3) node[above] {$\{\lambda_{i}(t)\}_{i}$};
  \draw[scale=2.8, domain=0:1, smooth, variable=\x, Red] plot ({\x}, {0.8-0.2*sin(2.5*\x*\x r)});
  \draw[scale=2.8, domain=0:1, smooth, variable=\y] plot ({\y*\y}, {0.3+0.2*\y*\y*\y});
  \node (A2) at (1.5,2.1) {\footnotesize \color{Red}$\lambda_{\max}$};
\end{tikzpicture}
\begin{tikzpicture}[scale=2]
  \draw[->] (0, 0) -- ++(3, 0) node[right] {$t$};
  \draw (0, 0) -- ++(0,-0.05) node[below] {$0$};
  \draw (2.8, 0) -- ++(0,-0.05) node[below] {$1$};
  \draw[->] (0, 0) -- ++(0, 3) node[above] {$\{\lambda_{i}(t)\}_{i}$};
  \draw[scale=2.8, domain=0:1, smooth, variable=\x, Red] plot ({\x}, {0.8-0.2*sin(2.5*\x*\x r)});
  \draw[scale=2.8, domain=0:1, smooth, variable=\y] plot ({\y}, {0.3+0.38*\y*\y*\y});
  \node (A2) at (1.5,2.1) {\footnotesize \color{Red}$\lambda_{\max}$};
\end{tikzpicture}
\caption{The eigenvalue curve $\lambda_{\max}$ stays separated from all the other eigenvalue curves for $0 \leq t < 1$.
If the corresponding COP has exactly one optimal solution the separation extends to $t = 1$ (left plot).
However, if the COP has multiple optimal solutions $\lambda_{\max}$ intersects with at least one other eigenvalue curve at $t = 1$ (right plot).}
\label{figure:LevelCrossing}
\end{figure}
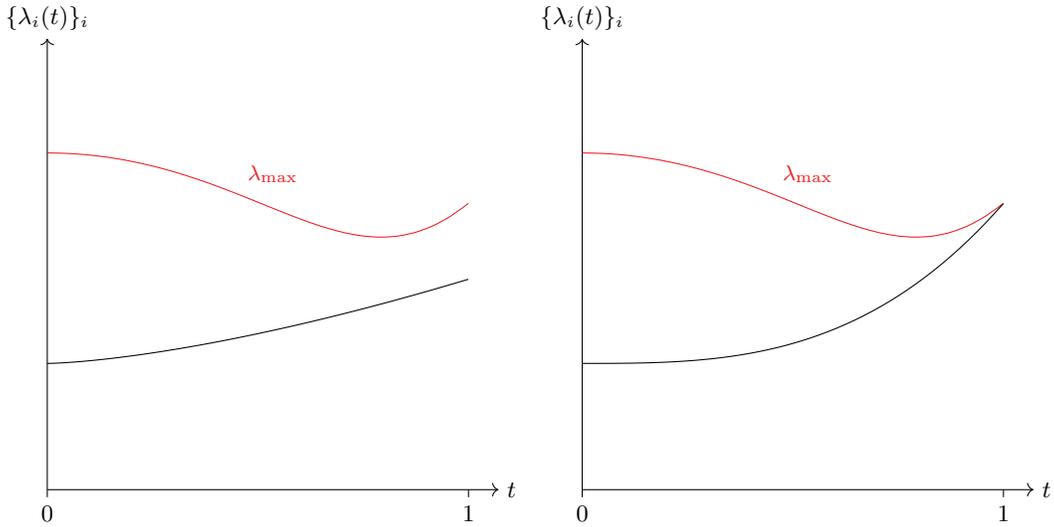

\section{Convergence Proof for the QAOA}\label{section:ConvergenceProofQAOA}
We next examine the convergence behavior of the QAOA which contains the quantum approximate optimization algorithm as a special case.
Its ingredients are basically the same as for our generalized version of the QAA.
However, the decomposition of the mixer Hamiltonian into local Hamiltonians is extremely valuable from an application-oriented point of view and is also introduced by the QAOA.
In the spirit of \autoref{definition:Mixer}, we propose the following adaptation of Hadfield et al.'s\ definition.

\begin{definition}\label{definition:MixingFamily}
    Given a COP with solution space $\solspace$, a family of Hamiltonians $\{B_{i}\}_{i \in I} \subset \lo(\hil)$ is called a \textit{mixing family} iff for every $i \in I$, $B_{i}(\solspace) \subseteq \solspace$, $B_{i}\vert_{\solspace}$ is component-wise non-negative in the computational basis, and any coordinate subspace of $\solspace$ that is left invariant under every $B_{i}$ is already trivial.
\end{definition}

That \autoref{definition:MixingFamily} really is a decomposed version of \autoref{definition:Mixer} can be argued as follows:
Consider the matrix representation of each of the operators $B_{i}\vert_{\solspace}$ in the computational basis as adjacency matrix of a graph whose vertices are identified with feasible computational basis states.
Starting from the graph resembled by $B_{1}$, adding another operator $B_{i}$ corresponds to adding edges represented by non-zero entries of $B_{i}$'s matrix representation.
The actual weights (i.e., values of the entries) are not important, but the condition of component-wise non-negativity implies that no entries are cancelled during the summation, that is, the edge set of the graph $G_{I}$ with adjacency matrix
\begin{align}
    B_{I}\vert_{\solspace}, \quad B_{I} \coloneqq \sum_{i \in I} B_{i}
\end{align}
really is the union of all the edge sets of the graphs $G_{i}$ with respective adjacency matrix $B_{i}\vert_{\solspace}$, $i \in I$.
The imposed condition of triviality of mutual invariant coordinate subspaces then is equivalent to the fact that $G_{I}$ is fully connected which, in turn, is equivalent to its adjacency matrix being irreducible.
Thus, we have concluded
\begin{proposition}\label{proposition:EquivalenceOfMixingProperties}
    Given a COP with solution space $\solspace$, a family of Hamiltonians $\{B_{i}\}_{i \in I} \subset \lo(\hil)$ is a mixing family iff $B_{I}$ is a mixer Hamiltonian.
\end{proposition}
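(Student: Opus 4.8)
The plan is to read off the two defining properties of a mixer Hamiltonian (\autoref{definition:Mixer}) for $B_I = \sum_{i \in I} B_i$ directly from the three defining properties of a mixing family (\autoref{definition:MixingFamily}), and conversely; the graph picture sketched above serves only as a mnemonic. Throughout I identify each $B_i\vert_{\solspace}$ and $B_I\vert_{\solspace}$ with its matrix in the computational basis $\{\ket{\bm{z}} \defcolon \bm{z} \in S\}$ and write $(A)_{\bm{z}' \bm{z}}$ for the $(\bm{z}', \bm{z})$-entry of a matrix $A$.

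First I would dispose of the two ``easy'' properties. If every $B_i$ preserves $\solspace$, then for $\ket{\psi} \in \solspace$ one has $B_I \ket{\psi} = \sum_{i \in I} B_i \ket{\psi} \in \solspace$, so $B_I(\solspace) \subseteq \solspace$; and $(B_I\vert_{\solspace})_{\bm{z}' \bm{z}} = \sum_{i \in I} (B_i\vert_{\solspace})_{\bm{z}' \bm{z}}$ is a sum of non-negative reals, hence $B_I\vert_{\solspace}$ is component-wise non-negative. So the first two mixing-family conditions already hand us feasibility preservation and non-negativity of $B_I$. For the converse I would keep these two ``local'' conditions on the individual $B_i$ as a standing assumption --- they are exactly what the adjacency-matrix picture presupposes, and they cannot be dropped, since a negative off-diagonal entry of one $B_i$ may be masked in the sum $B_I$.

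The core of the argument is the equivalence between the third mixing-family condition and irreducibility of $B_I\vert_{\solspace}$. For a coordinate subspace $W = \spn\{\ket{\bm{z}} \defcolon \bm{z} \in J\}$ with $J \subseteq S$, and any $A$ with $A(\solspace) \subseteq \solspace$, one has $A\vert_{\solspace}(W) \subseteq W$ if and only if $(A)_{\bm{z}' \bm{z}} = 0$ for every $\bm{z} \in J$ and $\bm{z}' \in S \setminus J$. Applying this observation both to each $B_i$ and to $B_I$, the subspace $W$ is invariant under every $B_i$ iff $(B_i)_{\bm{z}' \bm{z}} = 0$ for all $i \in I$ and all such pairs $(\bm{z}', \bm{z})$, whereas $W$ is invariant under $B_I$ iff $\sum_{i \in I} (B_i)_{\bm{z}' \bm{z}} = 0$ for all such pairs. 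Here component-wise non-negativity of the $B_i\vert_{\solspace}$ is decisive: a finite sum of non-negative numbers vanishes iff every summand vanishes (no edge of any $G_i$ is cancelled in $G_I$), so the two conditions on $W$ coincide. Consequently the coordinate subspaces left invariant by all the $B_i$ are precisely those left invariant by $B_I$, and ``every such subspace is trivial'' is verbatim the assertion that $B_I\vert_{\solspace}$ is irreducible in the sense of \autoref{definition:Irreducibility}.

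Assembling the pieces gives both implications: a mixing family produces a $B_I$ that preserves $\solspace$, is component-wise non-negative, and --- by the invariant-subspace correspondence together with triviality --- irreducible, i.e.\ a mixer Hamiltonian; and if $B_I$ is a mixer Hamiltonian while the $B_i$ meet the two local conditions, then irreducibility of $B_I\vert_{\solspace}$ forces every mutually invariant coordinate subspace to be trivial, so $\{B_i\}_{i \in I}$ is a mixing family. The step I expect to require the most care is precisely the ``no cancellation'' equivalence in the previous paragraph: it is the only place where non-negativity enters, and it shows why \autoref{definition:MixingFamily} must impose non-negativity on each $B_i\vert_{\solspace}$ separately rather than only on $B_I\vert_{\solspace}$.
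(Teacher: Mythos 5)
Your proposal is correct and takes essentially the same route as the paper: the paper expresses the decisive ``no cancellation under non-negative summation'' step in the language of adjacency graphs and unions of edge sets, while you state it directly in terms of matrix entries, showing that the coordinate subspaces invariant under $B_{I}\vert_{\solspace}$ are exactly those invariant under every $B_{i}\vert_{\solspace}$. Your explicit remark that the converse direction presupposes the local feasibility-preservation and non-negativity conditions on the individual $B_{i}$ is also implicit in the paper's picture, so there is no substantive difference.
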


\begin{figure}
\begin{tikzpicture}
    \node (z11) at (0,0) {$\ket{\bm{z}_{1}}$};
    \node (z21) at (2.5,0) {$\ket{\bm{z}_{2}}$};
    \node (z31) at (0,-2.5) {$\ket{\bm{z}_{3}}$};
    \node (z41) at (2.5,-2.5) {$\ket{\bm{z}_{4}}$};
    
    \draw (z11.east) -- (z21.west);
    \draw (z31.east) -- (z41.west);
    
    \node (g1) at (1.25,-3.5) {$G_{1}$};
    
    \node (union) at (4,-1.25) {\Large $\cup$};
    
    \node (z12) at (5.5,0) {$\ket{\bm{z}_{1}}$};
    \node (z22) at (8,0) {$\ket{\bm{z}_{2}}$};
    \node (z32) at (5.5,-2.5) {$\ket{\bm{z}_{3}}$};
    \node (z42) at (8,-2.5) {$\ket{\bm{z}_{4}}$};
    
    \node (g2) at (6.75,-3.5) {$G_{2}$};
    
    \draw (z12.south) -- (z32.north);
    \draw (z32.east) -- (z42.west);
    
    \node (equals) at (9.5,-1.25) {\Large $=$};
    
    \node (z13) at (11,0) {$\ket{\bm{z}_{1}}$};
    \node (z23) at (13.5,0) {$\ket{\bm{z}_{2}}$};
    \node (z33) at (11,-2.5) {$\ket{\bm{z}_{3}}$};
    \node (z43) at (13.5,-2.5) {$\ket{\bm{z}_{4}}$};
    
    \draw (z13.east) -- (z23.west);
    \draw (z13.south) -- (z33.north);
    \draw (z33.east) -- (z43.west);
    
    \node (g3) at (12.25,-3.5) {$G_{I}$};
    
\end{tikzpicture}
\caption{Illustration of the proof of \autoref{proposition:EquivalenceOfMixingProperties}.
The feasible subspace $\solspace$ is spanned by $\ket{\bm{z}_{1}}, \ket{\bm{z}_{2}}, \ket{\bm{z}_{3}}$, and $\ket{\bm{z}_{4}}$. The proper invariant coordinate subspaces of $B_{1}\vert_{\solspace}$ are $\spn\{\ket{\bm{z}_{1}}, \ket{\bm{z}_{2}}\}$ and $\spn\{\ket{\bm{z}_{3}}, \ket{\bm{z}_{4}}\}$ while $B_{2}\vert_{\solspace}$ has $\spn\{\ket{\bm{z}_{1}}, \ket{\bm{z}_{3}}, \ket{\bm{z}_{4}}\}$ and $\spn\{\ket{\bm{z}_{2}}\}$ as its proper invariant coordinate subspaces.
Thus, the set of their mutual proper invariant coordinate subspaces is empty, hence $B_{I} = B_{1} + B_{2}$ is a mixer.
}
\label{figure:ProofSketch}
\end{figure}

Utilizing our definition of a mixing family, we now introduce our version of `simultaneous' and `sequential' mixers.

\begin{definition}\label{definition:Mixer2}
    Let $\mathsf{H} = \{H_{i}\}_{i \in I} \subset \lo(\hil)$ be a mixing family for a given COP.
    The corresponding (\textit{parametrized}) \textit{simultaneous mixer} is defined as
    \begin{align}\label{equation:SimultaneousMixer}
        \simum(\mathsf{H}, \beta) \coloneqq e^{- i \beta \sum_{i \in I} H_{i}}.
    \end{align}
    Specifying a permutation $\sigma \in S(I)$, the corresponding (\textit{parametrized}) \textit{sequential mixer} is defined as
    \begin{align}\label{equation:SequentialMixer}
        \sequm(\mathsf{H}, \beta) \coloneqq \prod_{i \in I} e^{- i \beta H_{\sigma(i)}}.
    \end{align}
\end{definition}

From their definition it immediately follows that both \eqref{equation:SimultaneousMixer} and \eqref{equation:SequentialMixer} fulfill the original QAOA demands: feasibility preservation and full mixing of solutions.
However, due to our refined definition, we can now extend the sketch of a convergence proof in \cite{Farhi2014AQuantumApproximateOptimizationAlgorithm} to the general QAOA setting.
The procedure is as follows:
\begin{itemize}
    \item[1.] discretize the quasi-adiabatic time evolution $U_{T}$
    \item[2.] decompose $H_{\lin(B, C)}$ using a (multivariate) Lie product formula
    \item[3.] exploit the convergence of the corresponding QAA instance
\end{itemize}

We start with a simple statement about the distance of products of operators with factors being close together.

\begin{lemma}\label{lemma:ProductApproximation}
    For $\varepsilon >0$ and $m \in \N$, let $\{V_{j}\}_{j = 1}^{m}, \{W_{j}\}_{j = 1}^{m}, \subset \lo(\hil)$ be families of unitary operators so that
    \begin{align}\label{equation:EpsilonEstimate}
        \left\lVert V_{j} - W_{j}\right\rVert < \varepsilon
    \end{align}
    holds for all $j \in [m]$.
    Then the following estimate is valid:
    \begin{align}\label{equation:ProductApproximation}
        \left\lVert\prod_{j = 1}^{m} V_{j} - \prod_{j = 1}^{m} W_{j}\right\rVert < (1 + \varepsilon)^{m} - 1.
    \end{align}
\end{lemma}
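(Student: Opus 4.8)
The plan is to proceed by induction on the number $m$ of factors, using only submultiplicativity of the operator norm together with the fact that a product of unitaries is again unitary and hence has norm one. The base case $m = 1$ is immediate: the hypothesis \eqref{equation:EpsilonEstimate} reads $\lVert V_1 - W_1 \rVert < \varepsilon = (1+\varepsilon)^1 - 1$.

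For the inductive step I would assume the claim for families of $m - 1$ unitaries and, given $\{V_j\}_{j=1}^m$ and $\{W_j\}_{j=1}^m$ as in the hypothesis, split off the first factor while inserting the mixed term $V_1 \prod_{j=2}^m W_j$:
\begin{align*}
    \prod_{j=1}^m V_j - \prod_{j=1}^m W_j = (V_1 - W_1)\prod_{j=2}^m W_j + V_1\left(\prod_{j=2}^m V_j - \prod_{j=2}^m W_j\right).
\end{align*}
Applying the triangle inequality and submultiplicativity, and using that $V_1$ and $\prod_{j=2}^m W_j$ are unitary, the right-hand side is bounded by $\lVert V_1 - W_1\rVert + \lVert \prod_{j=2}^m V_j - \prod_{j=2}^m W_j\rVert$. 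The first summand is $< \varepsilon$ by \eqref{equation:EpsilonEstimate}; the second is $< (1+\varepsilon)^{m-1} - 1$ by the induction hypothesis applied to the truncated families $\{V_j\}_{j=2}^m$, $\{W_j\}_{j=2}^m$, which again consist of pairwise $\varepsilon$-close unitaries.

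It then remains only to verify the elementary numerical estimate $\varepsilon + (1+\varepsilon)^{m-1} - 1 \leq (1+\varepsilon)^m - 1$, which holds because $(1+\varepsilon)^m - (1+\varepsilon)^{m-1} = \varepsilon (1+\varepsilon)^{m-1} \geq \varepsilon$ for $\varepsilon > 0$; this closes the induction. Alternatively, one may expand the difference as the telescoping sum $\sum_{k=1}^m \bigl(\prod_{j<k} W_j\bigr)(V_k - W_k)\bigl(\prod_{j>k} V_j\bigr)$ and bound each $\lVert W_j \rVert < 1 + \varepsilon$, so that the norm is at most the geometric sum $\sum_{k=1}^m \varepsilon\,(1+\varepsilon)^{k-1} = (1+\varepsilon)^m - 1$ directly. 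There is no genuine obstacle here; the only points meriting a little care are keeping the inequality strict — which is fine since each factorwise estimate is strict and there is at least one factor — and, in the telescoping variant, getting the index bookkeeping of the mixed products correct.
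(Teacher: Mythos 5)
Your inductive argument is correct and is essentially the same as the paper's own proof: both proceed by induction on the number of factors, peel off a single factor (you the first, the paper the last), use unitarity so that the remaining product has norm one, and close the induction with the same elementary estimate $\varepsilon \leq \varepsilon(1+\varepsilon)^{m-1}$. The telescoping variant you sketch is also fine (with unitaries it even gives the sharper bound $m\varepsilon$), but it is not needed beyond the induction.
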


\begin{proof}
    Since \eqref{equation:EpsilonEstimate} holds, one can find linear operators $R_{j} \in \lo(\hil)$ with $\lVert R_{j}\rVert \leq 1$ and $V_{j} = W_{j} + \varepsilon R_{j}$ for each $j \in [m]$, respectively.
    \eqref{equation:ProductApproximation} clearly holds for $m = 1$.
    Therefore, it remains to show that if \eqref{equation:ProductApproximation} holds for an $m \in \N$, then it also holds for $m + 1$: 
    \begin{align*}
        \left\lVert \prod_{j = 1}^{m + 1} V_{j} - \prod_{j = 1}^{m + 1} W_{j}\right\rVert &= \left\lVert \left(\prod_{j = 1}^{m} V_{j}\right) V_{m + 1} - \left(\prod_{j = 1}^{m} W_{j}\right) W_{m + 1}\right\rVert \\
        &= \left\lVert\left(\prod_{j = 1}^{m} V_{j}\right) (W_{m + 1} + \varepsilon R_{m + 1}) - \left(\prod_{j = 1}^{m} W_{j}\right) W_{m + 1}\right\rVert \\
        &= \left\lVert\left(\prod_{j = 1}^{m} V_{j}\right) (\one - \varepsilon R_{m + 1} W_{m + 1}^{*}) - \left(\prod_{j = 1}^{m} W_{j}\right)\right\rVert \\
        &\leq \left\lVert\prod_{j = 1}^{m} V_{j} - \prod_{j = 1}^{m} W_{j}\right\rVert + \left\lVert\left(\prod_{j = 1}^{m} V_{j}\right) \varepsilon R_{m + 1} W_{m + 1}^{*}\right\rVert \\
        &< (1 + \varepsilon)^{m} - 1 + \varepsilon \\
        &\leq (1 + \varepsilon)^{m} - 1 + \varepsilon (1 + \varepsilon)^{m} = (1 + \varepsilon)^{m + 1} - 1.
    \end{align*}
\end{proof}

\begin{theorem}[Convergence of QAOA]\label{theorem:ConvergenceOfQAOA}
    Consider a COP with solution space $\solspace \subseteq \hil$, optimal solution space $\optspace \subseteq \solspace$, phase separator Hamiltonian $C$, and mixing family $\{B_{i}\}_{i \in I}$.
    Let $\up$ and $\um$ be the corresponding phase separator and (simultaneous or sequential) mixer.
    Furthermore, let $\ket{\iota} \in \solspace$ be a highest energy state of $B_{I}\vert_{\solspace}$.
    Then, for every $\varepsilon > 0$, one can choose finitely many parameters $\vec{\beta}$ and $\vec{\gamma}$ such that
    \begin{align}\label{equation:ConvergenceOfQAOA}
        \dist(\ket{\vec{\beta}, \vec{\gamma}}, \optspace) < \varepsilon,
    \end{align}
    where
    \begin{align}
        \ket{\vec{\beta}, \vec{\gamma}} \coloneqq V(\vec{\beta}, \vec{\gamma}) \ket{\iota} \coloneqq \left(\prod_{q} \um(\beta_{q}) \up(\gamma_{q})\right) \ket{\iota}.
    \end{align}
\end{theorem}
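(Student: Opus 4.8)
The plan is to realize the quasi-adiabatic evolution of \autoref{theorem:ConvergenceOfQAA} as an operator-norm limit of QAOA circuits of the prescribed form, exactly along the three-step scheme listed above. The hypotheses feed directly into the QAA result: by \autoref{proposition:EquivalenceOfMixingProperties} the operator $B_{I} \coloneqq \sum_{i \in I} B_{i}$ is a mixer Hamiltonian, and $\ket{\iota}$ is a highest energy state of $B_{I}\vert_{\solspace}$, so \autoref{theorem:ConvergenceOfQAA} applied to the linear interpolation $H(t) \coloneqq (1-t) B_{I} + t C$ gives that $U_{T}(1)\ket{\iota}$ converges, as $T \to \infty$, to some unit vector $\ket{\psi} \in \optspace$. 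All operators that appear ($B_{I}$, $C$, hence $H(t)$, and the gates $\um$, $\up$) preserve $\solspace$ and $\ket{\iota} \in \solspace$, so every state stays in $\solspace$ and I may take all norms to be operator norms. Since $\dist(\cdot, \optspace)$ is $1$-Lipschitz and $\ket{\psi} \in \optspace$, it suffices, given $\varepsilon > 0$, to fix first a $T$ with $\lVert U_{T}(1)\ket{\iota} - \ket{\psi} \rVert < \varepsilon/2$ and then to build a QAOA circuit $V(\vec{\beta}, \vec{\gamma})$ with $\lVert V(\vec{\beta}, \vec{\gamma}) - U_{T}(1) \rVert < \varepsilon/2$.

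\emph{Step~1 (discretization).} For fixed $T$, $U_{T}(1) = \Tilde{U}_{T}(T)$ is the time-ordered exponential of $-i H(\cdot/T)$ over $[0, T]$; partitioning $[0, T]$ into $p$ equal subintervals and sampling $H$ at left endpoints produces the product-integral approximant $W_{p} \coloneqq \prod_{k = p-1}^{0} e^{-i (T/p) H(k/p)}$ (time-ordered, later times to the left). Because $t \mapsto H(t)$ is affine and hence Lipschitz on $[0, 1]$, an elementary estimate --- of the same telescoping flavor as \autoref{lemma:ProductApproximation} --- gives $\lVert U_{T}(1) - W_{p} \rVert = O(1/p)$ for fixed $T$; I choose $p$ so that this is $< \varepsilon/4$, after which $p$ is frozen and only the $p$ factors of $W_{p}$ remain.

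\emph{Step~2 (Lie product formula) and Step~3 (reassembly).} Each factor equals $e^{-i (T/p) H(k/p)} = e^{-i (T/p)[(1 - \frac{k}{p}) B_{I} + \frac{k}{p} C]}$. For the simultaneous mixer, $\simum(\mathsf{H}, \beta) = e^{-i \beta B_{I}}$ and $\up(C, \gamma) = e^{-i \gamma C}$, so the two-operator Lie product formula gives
\[
  \Bigl( \simum\bigl(\mathsf{H}, \tfrac{T}{pr}\bigl(1 - \tfrac{k}{p}\bigr)\bigr)\, \up\bigl(C, \tfrac{T}{pr}\tfrac{k}{p}\bigr) \Bigr)^{r} \xrightarrow[r \to \infty]{} e^{-i (T/p) H(k/p)} .
\]
For the sequential mixer one invokes instead the multivariate Lie--Trotter formula $\bigl(\prod_{j=1}^{n} e^{A_{j}/r}\bigr)^{r} \to e^{\sum_{j} A_{j}}$ for bounded operators, applied to the ordered list $-i\tfrac{T}{p}(1 - \tfrac{k}{p}) B_{\sigma(1)}, \dots, -i\tfrac{T}{p}(1 - \tfrac{k}{p}) B_{\sigma(\lvert I \rvert)}, -i\tfrac{T}{p}\tfrac{k}{p} C$, so that $\bigl(\sequm(\mathsf{H}, \tfrac{T}{pr}(1 - \tfrac{k}{p}))\, \up(C, \tfrac{T}{pr}\tfrac{k}{p})\bigr)^{r}$ likewise converges to $e^{-i(T/p) H(k/p)}$. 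Now pick $\eta > 0$ with $(1 + \eta)^{p} - 1 < \varepsilon/4$, and for each of the finitely many indices $k$ pick $r_{k} \in \N$ so that the $r_{k}$-th power of the corresponding layer is within $\eta$ of $e^{-i(T/p) H(k/p)}$. Concatenating these $p$ blocks in the order of $W_{p}$ gives a depth-$\sum_{k} r_{k}$ circuit $V(\vec{\beta}, \vec{\gamma})$ whose parameters are constant on each block, hence finitely many; \autoref{lemma:ProductApproximation} (all factors unitary) yields $\lVert V(\vec{\beta}, \vec{\gamma}) - W_{p} \rVert < (1 + \eta)^{p} - 1 < \varepsilon/4$, so $\lVert V(\vec{\beta}, \vec{\gamma}) - U_{T}(1) \rVert < \varepsilon/2$, and combining with the choice of $T$,
\[
  \dist\bigl(\ket{\vec{\beta}, \vec{\gamma}}, \optspace\bigr) \le \lVert V(\vec{\beta}, \vec{\gamma})\ket{\iota} - U_{T}(1)\ket{\iota} \rVert + \lVert U_{T}(1)\ket{\iota} - \ket{\psi} \rVert < \varepsilon .
\]

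The chief difficulty is not any single estimate but the bookkeeping of the nested choices --- $T$ before $p$, and $p$ (hence $\eta$) before the depths $r_{k}$ --- so that the output circuit genuinely has finite depth $\sum_{k} r_{k}$; \autoref{lemma:ProductApproximation} is precisely the device that keeps the $p$-fold accumulation of per-factor Trotter errors from blowing up. The only input beyond elementary estimates is the multivariate Lie product formula needed for the sequential mixer (a product of $\lvert I \rvert + 1$ bounded-operator exponentials in a prescribed order), which should be stated explicitly and, for a self-contained account, proved by the usual telescoping argument.
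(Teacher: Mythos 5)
Your proposal is correct and follows essentially the same route as the paper: discretize $U_{T}(1)$ into a product of exponentials, Trotterize each factor via the (multivariate) Lie product formula for the simultaneous resp.\ sequential mixer, control the accumulated error with \autoref{lemma:ProductApproximation}, and invoke \autoref{theorem:ConvergenceOfQAA}. The only cosmetic difference is that the paper works with the projection $P_{1}(1)$ and the factor $\alpha = \norm{\one - P_{1}(1)}$ rather than a limit vector $\ket{\psi}$ --- slightly safer, since the adiabatic theorem strictly yields $(\one - P_{1}(1)) U_{T}(1)\ket{\iota} \to 0$ rather than existence of $\lim_{T} U_{T}(1)\ket{\iota}$ --- but your direct use of the $1$-Lipschitz distance to $\optspace$ renders this immaterial.
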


\begin{proof}
Let $U_{T}$, $T > 0$, denote the quasi-adiabatic evolution w.r.t.\ $H_{\lin(B_{I}, C)}$.
By \autoref{proposition:EquivalenceOfMixingProperties}, $B_{I}$ is a mixer Hamiltonian in the sense of \autoref{definition:Mixer}.
Therefore, for any $\varepsilon > 0$, \autoref{theorem:ConvergenceOfQAA} implies the existence of a $T > 0$ so that
\begin{align*}
    \norm{(\one - P_{1}(1)) U_{T}(1) \ket{\iota}} < \frac{\varepsilon}{2},
\end{align*}
where $P_{1}$ is the $C^{2}$-continuation of the curve of spectral projections onto the highest energy eigenspaces of $H_{\lin(B_{I}, C)}\vert_{\solspace}$.
W.l.o.g.\ assume that $\dim(\solspace) > 1$ as the statement would be trivial otherwise.
Then, $\alpha \coloneqq \norm{\one - P_{1}(1)}_{\lo(\solspace)} > 0$ since $P_{1}(1)$ has rank one by continuity.
Discretizing the quasi-adiabatic time evolution $U_{T}(1)$ yields the existence of an $m \in \N$ such that
\begin{align}\label{equation:TimeEvolutionEstimate}
    \left\lVert\prod_{j = 1}^{m} e^{- i H_{\lin(B_{I}, C)}\left(j \frac{T}{m}\right) j \frac{T}{m}} - U_{T}(1)\right\rVert < \frac{\varepsilon}{4 \alpha}.
\end{align}
In the following, set 
\begin{align*}
    W_{j} \coloneqq e^{- i H_{\lin(B_{I}, C)}\left(j \frac{T}{m}\right) j \frac{T}{m}}
\end{align*}
and distinguish between the two possibilities to choose a mixer.

\paragraph*{Simultaneous mixer:}
The Lie product formula implies that for all $j \in [m]$, there exist $n_{j} \in \N$ such that for all $\tilde{n} \geq n_{j}$ it holds that 
\begin{align}\label{equation:LieEstimate}
    \left\lVert\left(e^{- i \frac{1 - j \frac{T}{m}}{\tilde{n}} j \frac{T}{m} B_{I}} e^{- i \frac{\left(j \frac{T}{m}\right)^{2}}{\tilde{n}}C}\right)^{\tilde{n}} - W_{j}\right\rVert < \sqrt[m]{\frac{\varepsilon}{4 \alpha} + 1} - 1,
\end{align}
respectively.
Taking $n \coloneqq \max\{n_{j} \defcolon j \in [m]\}$, this estimate holds for all $j \in [m]$ and (especially) $\tilde{n} = n$.

\paragraph*{Sequential mixer:}
W.l.o.g.\ choose the permutation $\sigma = \id_{I}$.
The multivariate Lie product formula \cite[Problem IX.8.5]{Bhatia1997MatrixAnalysis} imples that for all $j \in [m]$, there exist $n_{j} \in \N$ so that for all $\tilde{n} \geq n_{j}$ it holds that 
\begin{align}\label{equation:LieEstimate2}
    \norm{\left(\left(\prod_{i \in I} e^{- i \frac{\left(1 - j \frac{T}{m}\right)}{\tilde{n}} j \frac{T}{m} B_{i}}\right) e^{- i \frac{\left(j \frac{T}{m}\right)^{2}}{\tilde{n}} C}\right)^{\tilde{n}} - W_{j}} < \sqrt[m]{\frac{\varepsilon}{4 \alpha} + 1} - 1,
\end{align}
respectively.

In both cases, choose $q = n^{m}$ parameter values $\vec{\beta} = (\vec{\beta}_{1}, \ldots, \vec{\beta}_{m})$ and $\vec{\gamma} = (\vec{\gamma}_{1}, \ldots, \vec{\gamma}_{m})$ as 
\begin{align*}
    \big(\vec{\beta}_{j}\big)_{k} &= \frac{1 - j \frac{t T}{m}}{n} j \frac{t T}{m}\\
    \big(\vec{\gamma}_{j}\big)_{k} &= \frac{\left(j \frac{t T}{m}\right)^{2}}{n}
\end{align*}
for all $k \in [n]$ and all $j \in [m]$. Then, by construction, \eqref{equation:LieEstimate} and \eqref{equation:LieEstimate2} translate into 
\begin{align*}
    \left\lVert V(\vec{\beta}_{j}, \vec{\gamma}_{j}) - W_{j}\right\rVert < \sqrt[m]{\frac{\varepsilon}{4 \alpha} + 1} - 1 .
\end{align*}
Thus, by \eqref{equation:TimeEvolutionEstimate} and \autoref{lemma:ProductApproximation}, it follows that 
\begin{align*}
    \left\lVert V(\vec{\beta}, \vec{\gamma}) - U_{T}(t)\right\rVert &= \left\lVert\prod_{j = 1}^{m} V(\vec{\beta}_{j}, \vec{\gamma}_{j}) - U_{T}(t)\right\rVert \\
    &\leq \left\lVert\prod_{j = 1}^{m} V(\vec{\beta}_{j}, \vec{\gamma}_{j}) - \prod_{j = 1}^{m} W_{j}\right\rVert  + \left\lVert\prod_{j = 1}^{m} W_{j} - U_{T}(t)\right\rVert \\
    &< \frac{\varepsilon}{4 \alpha} + \frac{\varepsilon}{4 \alpha} = \frac{\varepsilon}{2 \alpha}.
\end{align*}

In summary, it follows that
\begin{align*}
    \norm{(\one - P_{1}(1)) \ket{\vec{\beta}, \vec{\gamma}}} &= \norm{(\one - P_{1}(1)) V(\vec{\beta}, \vec{\gamma}) \ket{\iota}} \\
    &\leq \norm{(\one - P_{1}(1)) (V(\vec{\beta}, \vec{\gamma}) - U_{T}(1)) \ket{\iota}} + \norm{(\one - P_{1}(1)) U_{T}(1) \ket{\iota}} \\
    &< \norm{\one - P_{1}(1)}_{\lo(\solspace)} \frac{\varepsilon}{2 \alpha} + \frac{\varepsilon}{2} = \varepsilon.
\end{align*}
Then, $\im(P_{1}(1)) \subseteq \optspace$ proves the assertion.
\end{proof}

\section{Conclusion and Outlook}\label{section:ConclusionAndOutlook}
In this paper we presented an elementary proof for the convergence of the QAOA.
This proof can be regarded as a discretized and carefully extended version of the \hyperref[theorem:AdiabaticTheorem]{Adiabatic Theorem}, building on the ideas of Farhi et al.
Beside another core theorem (\hyperref[theorem:PerronFrobenius]{Perron-Frobenius}), this extension is merely based on elementary matrix inequalities.
Most importantly, our proof builds on fewer assumptions (multiple optimal solutions are allowed) and extends to non-trivial feasibility structures ($\solspace \subsetneq \hil$).

Furthermore, the proof canonically gave rise to refined definitions of the \hyperref[definition:Mixer2]{QAOA-mixer} and \hyperref[definition:PhaseSeparator]{QAOA-phase separator} concepts.
Most notably, exactly the same notions arise when properly recreating classical feasibility symmetries within the framework of QAOA (see \cite{Kossmann2022OpenShopSchedulingWithHardConstraints}).
This strongly indicates that the definitions we gave in this paper optimally capture the overall principle the QAOA is based on.

We essentially showed that irreducibility and component-wise non-negativity of the mixer Hamiltonian $B$, restricted to the feasible subspace $\solspace$, are sufficient criteria for the convergence of the QAA and the QAOA.
Moreover, one can readily verify that irreducibility is also a necessary condition in the following sense: Given an arbitrary initial state $\ket{\iota}$ and the existence of a non-trivial $B\vert_{\solspace}$-invariant coordinate subspace, there always exists an objective Hamiltonians $C$ such that the QAA and the QAOA will not be able to approximate any state in $\solspace_{\max}$ to arbitrary precision.
On the other hand, the condition that $B\vert_{\solspace}$ should be component-wise non-negative is not necessary.
In our convergence proof, we imposed this condition in order to apply the \hyperref[theorem:PerronFrobenius]{Perron-Frobenius Theorem}.
However, there also exist more general versions of this theorem (see, e.g., \cite{Schneider1970CrossPositiveMatrices}) which substitute this condition and irreducibility with the more general properties of preserving a given cone and permuting its faces, respectively.
Unfortunately, the cones in questions are merely given by all the orthants in $\solspace$ since every coordinate subspace of $\solspace$ should be represented on their faces.
This yields again, up to some additionally allowed matrix signatures, the same conditions.
Therefore, we do not see much possibilities for relaxing the assumptions, made in \autoref{theorem:ConvergenceOfQAA}  and \autoref{theorem:ConvergenceOfQAOA}.

An interesting and still remaining question is whether one can also characterize the rate of convergence of the QAA for the case of multiple optimal solutions.
In this case, the spectral gap is necessarily vanishing for $t \to 1$ (see \autoref{figure:LevelCrossing}), but stays finite throughout the interval $[0, 1)$.
That is, even though a level crossing occurs, it only happens once and at a predictable time.
There are some results on the rate of convergence in the \hyperref[theorem:AdiabaticTheorem]{Adiabatic Theorem} which are valid for all kinds of (allowed) level crossings (see, e.g., \cite{Avron1999AdiabaticTheoremWithoutAGapCondition,Teufel2001ANoteOnTheAdiabaticTheoremWithoutGapCondition}).
Fine-tuning these results with respect to the particular situation of the QAA promises to be an insightful future project.

\begin{acknowledgments}

We thank Tim Heine, Lauritz van Luijk, Tobias J.\ Osborne, Christoph Pohl, Antonio Rotundo, Martin Steinbach, and Reinhard F.\ Werner for helpful discussions.
GK acknowledges financial support by the DAAD and IIT Indore (Kapil Ahuja) for a guest stay. 
RS acknowledges financial support by the Quantum Valley Lower Saxony and by the BMBF project ATIQ.
LB and TZ acknowledge finanical support by the BMBF project QuBRA.

\end{acknowledgments}

\bibliography{main.bib}

\end{document}